\newtheorem{theorem}{Theorem}
\newtheorem{assumption}{Assumption}
\newtheorem{definition}{Definition}
\title{RiteWeight: Randomized Iterative Trajectory Reweighting for Steady-State Distributions Without Discretization Error}
\author[1]{Sagar Kania}
\author[2]{Robert J. Webber}
\author[3]{Gideon Simpson}
\author[4]{David Aristoff}
\author[1]{Daniel M. Zuckerman\thanks{Corresponding author. Email: zuckermd@ohsu.edu}}
\affil[1]{Department of Biomedical Engineering, Oregon Health and Science University, Portland, OR 97239, USA}
\affil[2]{Department of Mathematics, University of California San Diego, La Jolla, CA 92093, USA}
\affil[3]{Department of Mathematics, Drexel University, Philadelphia, PA 19104, USA}
\affil[4]{Department of Mathematics, Colorado State University, Fort Collins, CO 80523, USA}
\date{\today}
\newcommand{\change}[1]{{\color{black}{#1}}}
\newcommand{\wtnew}{w^\mathrm{new}}
\begin{document}

\maketitle
\section*{Abstract}
A significant challenge in molecular dynamics (MD) simulations
is ensuring that sampled configurations converge to the equilibrium or nonequilibrium stationary distribution of interest. Lack of convergence constrains the estimation of free energies and of rates and mechanisms for molecular transitions. Here, we introduce the “Randomized ITErative trajectory reWeighting'' (RiteWeight) algorithm to estimate a stationary distribution from unconverged simulation data. This method iteratively reweights trajectory segments in a self-consistent way by solving for the stationary distribution of a Markov state model (MSM), updating segment weights, and employing a new random clustering in each iteration.  The repeated clustering mitigates the configuration-space discretization error inherent in existing trajectory reweighting techniques and yields quasi-continuous configuration-space distributions. 
RiteWeight accurately recovers the stationary distribution even without requiring the Markov property at the cluster level.
We present mathematical analysis of the RiteWeight fixed point.
We empirically validate the method using both synthetic MD Trp-cage trajectories, for which the stationary solution is exactly calculable, and standard atomistic MD Trp-cage trajectories, which are extracted from a long reference simulation.  In both test systems, RiteWeight corrects flawed distributions and generates accurate observables for equilibrium and nonequilibrium steady states. The results highlight the value of correcting the underlying trajectory distribution rather than using a standard MSM. 

\section*{Significance}
Molecular dynamics (MD) simulation is a key tool for studying the behavior of proteins and other biomolecules, but despite four decades of hardware and algorithm advances, MD cannot characterize the biomolecular behavior of most systems of interest. Typically, the molecular configurations generated by MD and related methods fail to conform to the equilibrium or nonequilibrium steady state distribution of interest, thereby limiting the accuracy of computed observables such as rate constants. The present report introduces a novel approach for correcting mis-distributed configurations using a Randomized ITErative reWeighting (RiteWeight) strategy.  The approach is validated for protein folding systems under both equilibrium and nonequilibrium conditions.

\section{Introduction}

Despite advances in enhanced equilibrium and path sampling methods \cite{bolhuis2002transition, elber2020milestoning, noe2009constructing, ensing2006metadynamics, kleiman2023adaptive, zuckerman2017weighted, henin2022enhanced}, the study of complex biomolecular systems remains resource-intensive, often surpassing researchers' computational capabilities. Conventional molecular dynamics (MD) cannot produce well-sampled configurational distributions except in special cases \cite{lindorff2011fast}.  A major challenge is estimating the ``steady state'' or stationary distribution of the sampled configurations  \cite{zuckerman2006second, vanGunsteren2006biomolecular, bolhuis2015practical}. The accuracy of the stationary distribution is crucial because it reveals thermodynamic and kinetic properties of the system.
The stationary distribution is needed to calculate free energies as well as identify mechanistic pathways, rate constants, and committor reaction coordinates \cite{russo2021unbiased, voelz2020adaptive, nuske2017markov}.

Recently, methods based on AlphaFold have generated protein structural ensembles \cite{ai_feig2023ensemble,ai_cortes2025alphafold_ensembles,ai_jaakkola024alphafold_ensembles,ai_meiler2023protein_states_ensembles}. However, these methods are heuristic in the sense that they are not designed to produce equilibrium ensembles conforming to the Boltzmann factor.  The algorithm described here, however, can leverage heuristic ensembles as the starting point for conventional MD and subsequent Boltzmann-factor reweighting.  While heuristic starting configurations can be updated with enhanced sampling \cite{tiwary2023alphafold_rave}, all such approaches can be improved by correcting the sample weights.

One principled way to estimate the steady state distribution from sampled trajectory data is to build a ``Markov state model'' (MSM), i.e., a discrete-state transition matrix that leads to an approximation of the stationary distribution \cite{husic2018markov, prinz2011markov}.  However, previous work has noted that MSM estimates of stationarity are biased by the trajectory training data \cite{caflisch2011equilibrium,vitalis2019msm_bias}. 
\change{To partially address this issue, the MSM trajectories can be reweighted based on the matrix's stationary solution \cite{voelz2020adaptive}.}
This procedure --- referred to here as ``single-shot reweighting'' --- can be helpful.
However, it cannot correct the weights of trajectories within the chosen discrete states, nor can it provide a set of weights consistent with a subsequently computed transition matrix~\cite{russo2021unbiased, russo2020iterative}, potentially skewing the estimation of observables.

Unbiased estimation only occurs in the traditional MSM framework when the trajectories in each discrete state are locally consistent with the stationary distribution for the chosen boundary conditions.
For example, source-sink boundary conditions lead to the challenging requirement to sample from the resulting nonequilibrium steady state\cite{russo2021unbiased, voelz2020adaptive, copperman2020accelerated}. 
The need for unbiased equilibrium or nonequilibrium samples creates a chicken and egg problem, which can be addressed by an iterative solution \cite{russo2021unbiased, russo2020iterative} at the cost of generating long trajectories. We note that non-traditional MSMs can be used to obtain unbiased estimates of observables  \cite{nuske2017markov}, but traditional MSMs are biased for both equilibrium and nonequilibrium observables even when significant trajectory data is available \cite{vitalis2019msm_bias,suarez2021markov}.

As an approach for correcting standard MSMs, we here introduce the ``Randomized ITErative trajectory reWeighting'' (RiteWeight) algorithm. RiteWeight reweights trajectories generated without biasing forces into their correct stationary distribution when sufficient data is available.
The method applies to both equilibrium and nonequilibrium steady states.
It naturally uses trajectories of any length, as short as just a single time step, as it does not rely on dynamical relaxation. For example, RiteWeight can employ data generated from standard molecular dynamics or path sampling approaches, as long as no biasing forces have been used.  We emphasize that RiteWeight is distinct from conventional importance sampling, which requires a known and well-sampled initial distribution \cite{swendsen1988single_histogram,zuckerman2010book}.

Figure~\ref{fig:RiteWeight_Algorithm} shows how the RiteWeight algorithm iteratively reweights trajectories using the \change{estimated} stationary measure $\boldsymbol{\pi}$ for a discrete-state transition matrix $\boldsymbol{T}$, which is determined in each iteration by a new random clustering. 
The relative weights of the trajectory segments in each cluster are fixed during a single iteration.
In the equilibrium case, the transition matrix used is akin to a MSM based on weighted trajectories.  In the nonequilibrium case, the matrix is computed from weighted trajectories conforming to source-sink boundary conditions, akin to a ``history augmented'' MSM \cite{suarez2014simultaneous,suarez2016nonMarkov,suarez2021markov,copperman2020accelerated}.
However, RiteWeight uses no history information and thus eliminates the requirement that all segments to be traced back to the source state.
RiteWeight also does not require the Markov property to hold at the cluster level, because discrete states are used only to estimate stationarity and not to propagate dynamics over time.

\begin{figure}[t]
\centering
\includegraphics[width=1.0\linewidth,trim=28 120 50 40,clip]{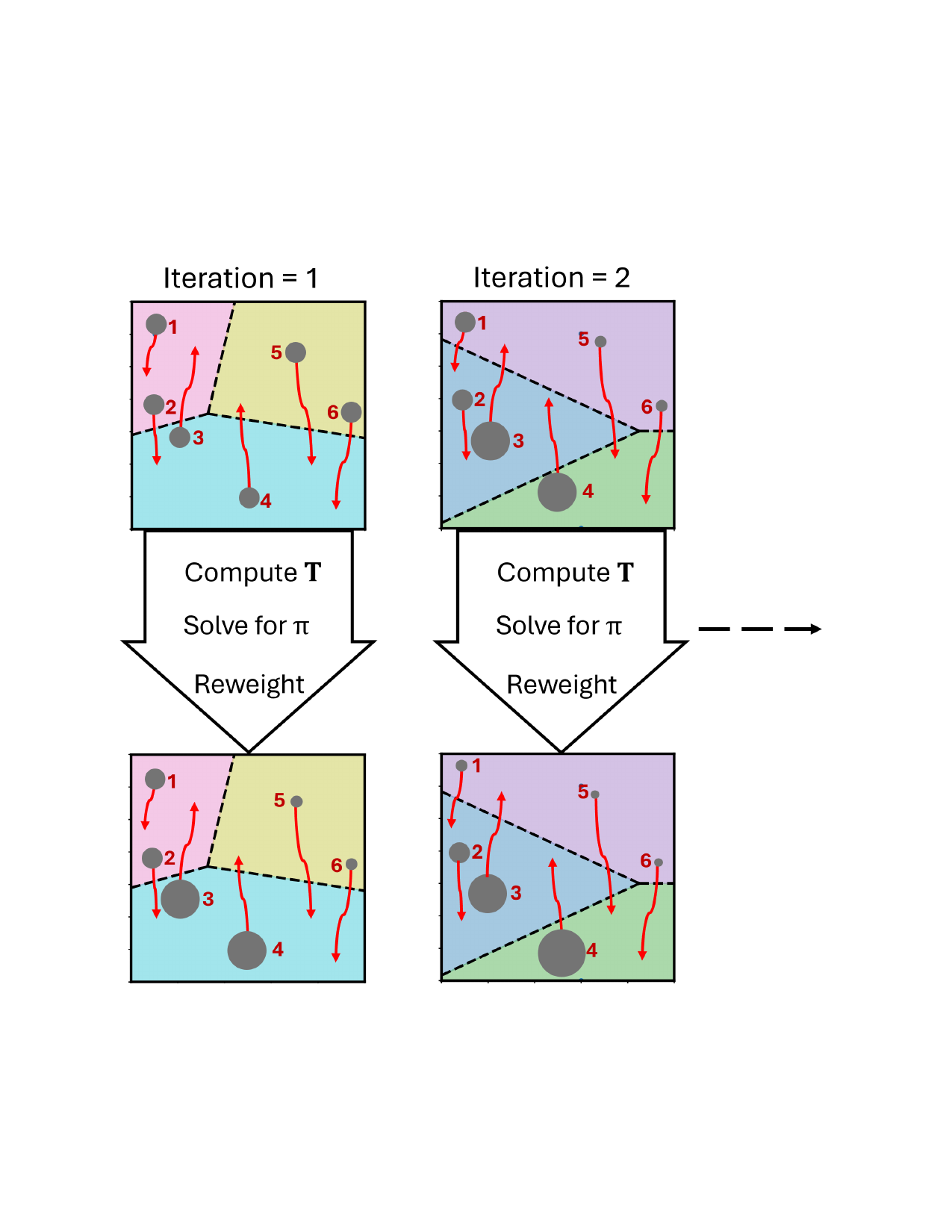}
\caption{The RiteWeight algorithm.  In each iteration, a fixed set of trajectories (red arrows) is organized into clusters (colored regions) based on their starting configurations.  Using the discrete clusters and current weights (circle sizes) of the trajectories, the transition matrix $\boldsymbol{T}$ is computed and solved to yield the stationary measure $\boldsymbol{\pi}$ for the given clusters.  
Each trajectory is then assigned a new weight (filled circles) so that the total cluster weights match with $\boldsymbol{\pi}$ but the relative weights of the trajectories starting within each cluster remain unchanged.  In subsequent iterations, the process is repeated with new cluster boundaries, enabling changes in the relative weights of trajectories formerly in the same cluster, e.g., trajectories 1 and 2 in iteration 2.
}
\label{fig:RiteWeight_Algorithm}
\end{figure}

RiteWeight differs from prior strategies for reweighting sampled trajectory data. For instance, one approach explicitly approximates the configuration-space density function in order to reweight configurations \cite{ytreberg2008blackbox}.  Another method updates the weights using a variational principle for the equilibrium distribution \cite{krivov2021nonparametric}.
RiteWeight, however, does not require density estimation and only uses standard Markov models without additional fitting parameters or assumed functional forms. 
In prior work, our research group has applied iterative reweighting without changing cluster definitions \cite{russo2020iterative}, but this approach does not yield the quasi-continuous distribution which RiteWeight achieves.

The present report evaluates the RiteWeight algorithm through mathematical analysis and empirical testing.
First, we present mathematical analysis identifying the fixed point of the RiteWeight algorithm.
Then, we present numerical tests based on both synthetic molecular dynamics (SynMD) and true MD trajectories of the Trp-cage miniprotein. SynMD consists of trajectories generated from a fine-grained MSM with each state mapped to an atomistic configuration, and it enables comparison to an exactly calculable reference distribution \cite{russo2022simple}. The true MD data is a single 208 $\mu$s trajectory generated by the Shaw group \cite{lindorff2011fast}. We compare the performance of RiteWeight, single-shot reweighting, and traditional MSMs for estimating the stationary distribution from mis-distributed data sets.  Overall, RiteWeight yields better agreement with reference values for all observables considered in both equilibrium and nonequilibrium scenarios.  RiteWeight achieves good performance using extremely short trajectory segments, enabling computation of mechanistic, path-based observables.  Last, RiteWeight results are independent of the number of clusters used to discretize the configuration space.

\FloatBarrier

\section{RiteWeight Algorithm}

The RiteWeight algorithm estimates steady state probability distributions using kinetic information extracted from numerous short MD trajectories or one long trajectory. The algorithm processes sequential pairs of configurations generated by unbiased dynamics and separated by a fixed lag time: $i = (i_1, i_2)$. We refer to these pairs as ``trajectory segments''. Unlike MSMs, the algorithm's ability to estimate stationary probabilities is not constrained by the choice of lag time \cite{prinz2011markov, husic2018markov}. Whereas single-shot reweighting requires within-cluster stationarity for unbiased results \cite{russo2021unbiased, voelz2020adaptive, copperman2020accelerated}, RiteWeight imposes no such constraints. Therefore, transition pair data can be collected right from the start of the trajectories, ensuring that no data is discarded and all available information is leveraged in the analysis.

\change{In the remainder of this report, we use a standardized nomenclature to avoid confusion.  The discrete regions of configuration space employed by RiteWeight will be called ``clusters,'' while the discrete regions used by MSMs will simply be called ``states.''  Collections of states used in the analysis of nonequilibrium fluxes will be called ``macrostates.'' The term ``microstate'' will refer to finest resolution available for describing a system, such as points in configuration space.
Thus, both clusters and states typically consist of multiple microstates, and a macrostate typically consists of multiple states.}

\subsection{The algorithm}

The following steps define the RiteWeight algorithm.
In this algorithm, the user must choose the number of clusters $n$ and identify a featurization that impacts the cluster definitions.

\begin{enumerate}
\item Introduce features of each configuration 
that satisfy rotational and translational invariance. 
For example, C$_\alpha$ pairwise distances are commonly used in conventional MSMs \cite{chodera2014markov_review}.
Define a distance between configurations based on the chosen features.
\item Assign an initial weight $w_i \in (0, 1)$ to each trajectory segment $i$ connecting two consecutive configurations, using the normalization $\sum_i w_i = 1$.
A possible choice is uniform weighting: each $w_i = 1/N$ where $N$ is the number of segments.
\item Randomly select $n \ll N$ unique configurations as cluster centers.
Define clusters by mapping each configuration to the closest center using the chosen distance.
\item Compute a transition matrix ${\boldsymbol T}$ based on the current weights of the trajectory segments and the current definitions of the clusters:
\begin{equation}
\label{rw_transn_matrix}
T_{IJ} = \frac{\sum_{i_1 \in I, \, i_2 \in J} w_i}{\sum_{i_1 \in I} w_i}
\end{equation}
where $i_1 \in I$ means that trajectory segment $i$ begins in cluster $I$, and $i_2 \in J$ means that segment $i$ terminates in cluster $J$. 
\item Calculate the stationary probabilities \(\pi_I\) for the \(n\) clusters using the left leading eigenvector of the matrix ${\boldsymbol T}$, i.e., ${\boldsymbol \pi}^\top \, {\boldsymbol T} = {\boldsymbol \pi}^\top$.

\item Define a new weight for each trajectory segment $i$ based on the current weights and the stationary solution $\boldsymbol{\pi}$ from Step 5. 
For each segment $i$ starting from cluster $I$, the new weight is given as:
\begin{equation}
\label{reweight-eq}
\wtnew_i = 
\frac{\pi_I}{w_I} w_i
\end{equation}
where $w_i$ is the previous weight of the $i^{th}$ trajectory segment, $\wtnew_i$ is the new weight, $\pi_I$ is the current iteration's estimate of the stationary probability for the cluster $I$, and $w_I = \sum_{i_1 \in I} w_i$. \change{Also see the more general update rule in eq.~\eqref{e:alglr}, which introduces an additional learning rate parameter.}
\item Repeat steps 3-6 until a user-defined convergence criterion and average the weights over the final iterations.
\end{enumerate}

Several points are noteworthy. 
In step 2, there are two approaches for assigning the initial weights to trajectory segments.
One option is to assign uniform weights to all segments, which serves as an uninformative prior. Alternatively, if there is prior knowledge regarding the stationary distribution of the system, this information can be used to assign the initial weights in a more informed manner.
Random clusters are defined in step 3, and a new clustering is introduced at every iteration. 
In step 6, the weight of each trajectory in cluster $I$ is multiplied by a fraction $\pi_I/w_I$, where $\pi_I$ is the currently estimated stationary probability and $w_I$ is the previous estimate for the cluster. 

Last, we define convergence in the numerical experiments by running RiteWeight until the weights or mean first passage time estimate exhibits small, steady fluctuations (typically $10^4$--$10^6$ iterations).
Then, we average the weights over the final 1,000 or 10,000 iterations to further reduce noise.
See supplementary Figs.~\ref{fig:synmd-converge-kl}, \ref{fig:RiteWeight_Eq_Conv}, and \ref{fig:MFPT_Conv} for examples of RiteWeight convergence analysis.

\subsection{Analysis of RiteWeight fixed point}

The mathematical analysis of RiteWeight illustrates the strengths and limitations of the algorithm.
We analyze the fixed point in detail in the appendix, and we summarize the findings here.

For simplicity and clarity, our analysis assumes that the configurations all belong to a discrete space of ``microstates'' indexed by $\alpha$ or $\beta$.
We expect that RiteWeight in continuous space behaves similarly to RiteWeight in discrete space at a level of resolution automatically determined by the algorithm and its stopping criterion.

In the discrete setting, RiteWeight defines each update using coarse random clusters containing multiple microstates.
However, independent of the number of clusters, the fixed point of RiteWeight is determined by the stationary measure of the microstate transition matrix with elements
\begin{equation}
\label{eq:microstate_transition}
    P_{\alpha\beta} = \frac{\sum_{i_1 \in \alpha, \, i_2 \in \beta} w_i^{\rm init}}{\sum_{i_1 \in \alpha} w_i^{\rm init}}.
\end{equation}
Here, $w_i^{\rm init}$ is the \emph{initial} weight assigned to segment $i$.
The appendix proves the fixed point of RiteWeight equals the stationary vector of $\boldsymbol{P}$, assuming the stationary vector is unique and the clusters are chosen according to a random hyperplane tesselation.

The fixed point analysis suggests that RiteWeight will converge to the true stationary distribution given sufficiently dense and unbiased local sampling.
\change{In the limit of high transition counts, the trajectory segments will be described by a fine-grained transition matrix $\boldsymbol{P}$ that approaches the true transition operator.}
Thus, RiteWeight will find the correct distribution regardless of the number of clusters or the lag time.
However, in more realistic settings, the accuracy of RiteWeight is constrained by the data analyzed,
pointing to a valuable role for the adaptive sampling methods that will be mentioned in the discussion section.

Given the description of the RiteWeight fixed point, it is natural to ask whether the RiteWeight stationary distribution can be derived directly from a fine-grained MSM transition matrix.
In practice, this is not possible because the resolution of the microstates captured by RiteWeight cannot be easily determined for an arbitrary data set. 
The results section will show the limitations of simply using a fine-grained Markov state model without RiteWeight.

\section{Test systems} \label{sec:systems}

Here we introduce the synthetic MD and atomistic MD systems used in our empirical RiteWeight experiments.

\subsection{Synthetic MD for Trp-cage}

The first test system for RiteWeight is synthetic MD (SynMD).
SynMD offers some of the complexity of atomistic proteins, but it has computational and conceptual advantages.
Using this system, long trajectories can be efficiently generated, and stationary properties can be calculated exactly as a reference \cite{russo2022simple}.

\emph{Model.} The SynMD model for the Trp-cage miniprotein \cite{russo2022simple} is derived from an MSM trained on a 208 $\mu$s atomistic MD trajectory \cite{lindorff2011fast}.
The MSM uses 10,500 states, which is a finer discretization than a typical MSM, and each state is mapped to a specific atomistic configuration. 
\change{These 10,500 configurations are then used as microstates for SynMD}.
Kinetic Monte Carlo yields a ``synthetic'' MD trajectory that closely matches the statistical behavior of the original atomistic MD \cite{russo2022simple}.
The MSM lag time and hence the interval between synMD configurations is 1 ns, and the trajectories can be processed using standard MSMs or RiteWeight.
Because the synMD model is governed by an explicitly known MSM transition matrix, we can exactly calculate the equilibrium distribution \change{for microstates} using the leading left eigenvector of the matrix.

\emph{Trajectory preparation.} 
We prepared a data set of 5 ns SynMD trajectories using a sampling measure far from equilibrium.
First we applied tICA with a 5 ns lag time \cite{suarez2021markov} to the set of minimal residue-residue distances, which are defined by the closest distances between the heavy atoms of two residues separated in sequence by at least two neighboring residues.
Then we assigned indices to the 10,500 microstates based on the slowest time-lagged independent component (TIC$_1$) \cite{russo2022simple}.
We partitioned the SynMD indices into 500 bins and selected 20 initial microstates randomly with replacement from each bin.
For each selected microstate, we initiated a short trajectory of 5 ns, i.e., 5 steps. 
Hence the unprocessed trajectory data consists of 10,000 trajectory segments, each 5 steps long, distributed roughly evenly along the tIC$_1$ coordinate.

\emph{RiteWeight analysis.} We applied RiteWeight in a manner blind to the discrete nature of synMD, using only the mapped atomistic configurations.
We defined features using the slowest 10 tICs, and we rescaled the tICs based on commute distances.
Then we performed RiteWeight using either $n = 10$ or $n = 1000$ clusters based on the rescaled tICs.

\subsection{Atomistic MD for Trp-cage}

The second test system for RiteWeight is a 208 $\mu$s simulation of Trp-cage in explicit solvent \cite{lindorff2011fast}.
We used this atomistic MD trajectory data to investigate equilibrium and nonequilibrium steady states, as well as path-based observables including the mean first-passage time and net fluxes.

\subsubsection{Equilibrium distribution}
\emph{Trajectory preparation.}
Again we prepared a data set of short trajectories using a sampling measure far from equilibrium.
To that end, we applied tICA with a 10 ns lag time \cite{suarez2021markov} to the set of minimal residue-residue distances.
Then we projected all configurations from the 208 $\mu$s trajectory onto tIC$_1$ and grouped them into 100 uniformly spaced bins.
We subsampled 10,000 configurations within each bin;
however, if fewer than 10,000 configurations were available in a bin, we selected all the configurations. 
We used these configurations as the starting points for segments consisting of two snapshots from the long trajectory separated by a 10 ns lag time.

\emph{RiteWeight and MSM analysis.} 
We retained a sufficient number of tICs to explain 95\% of the total variance and rescaled the tICs based on commute distances. For RiteWeight, we performed random clustering ($n = 10$) using the rescaled tICs as features. For the MSM analysis, we performed $k$-means clustering based on $n = 100$--$50{,}000$ clusters. 
After clustering, we used the PyEMMA software package \cite{noe2015pyemma} for all MSM analysis.

\subsubsection{Nonequilibrium analysis}\label{subsec:nonequilibrium_analysis}

\emph{Trajectory preparation.} We harnessed the complete equilibrium-like 208 $\mu$s MD trajectory to generate trajectories for nonequilibrium steady state analysis.
No subsampling of the trajectory data was performed.
Instead, we extracted all possible two-step segments based on a lag time of $\tau=$ 100, 10, 1, or 0.2 ns (the shortest lag time in the original MD data).

\change{
\emph{MSM analysis and definition of ``folded'' and ``unfolded'' states.} 
Previous work has identified optimal MSM hyperparameters for the 208 $\mu$s MD data set \cite{husic2018markov,mey2024msm_optimize}.
We replicated the optimal parameter choices with the PyEMMA software package \cite{noe2015pyemma}, including features based on minimal residue-residue distances, dimensionality reduction using tICA with 100 tICs at a lag time of 10 ns, rescaled tICs based on commute distances, and clustering using $k$-means with $k=50$.
Next, following the prior work \cite{suarez2021markov}, we applied the fuzzy spectral clustering method PCCA++ with a lag time of 100 ns to identify two extreme states representing ``folded'' and ``unfolded'' structures.
PCCA++ also assigned folded and unfolded membership scores to each of the $48$ remaining MSM states.

\emph{RiteWeight analysis.}
In each RiteWeight iteration, we performed clustering using $n = 10$ random clusters based on the same tICA coordinates used for MSM analysis. 
We only clustered configurations in the intermediate region, excluding the previously identified MSM folded and unfolded states.
We then constructed a $12 \times 12$ transition matrix that included the intermediate clusters along with the folded and unfolded states. To enforce the non-equilibrium source-sink condition, we set the transition probability from the source to the sink equal to one and set the transition probability from the source to all other clusters equal to zero.

\emph{Flux analysis.}
Followed prior work \cite{suarez2021markov}, we identified $7$ ``macrostates'' for flux analysis with both MSMs and RiteWeight.
First we selected the $5$ intermediate MSM states whose PCCA++ membership scores were closest to 50\% folded / 50\% unfolded.
Then we grouped the remaining $45$ MSM states into a folded and an unfolded macrostate based on their membership scores.}

\section{Results}

Here we present the results of applying RiteWeight to the SynMD and atomistic MD systems introduced in Sec.~\ref{sec:systems}.

\subsection{Synthetic MD for Trp-cage}

Fig.\ \ref{fig:synmd-equil-compare} shows that
RiteWeight recovers the true equilibrium distribution for the SynMD Trp-cage system starting from trajectory data that is strongly mis-distributed.
Although the algorithm employs discrete MSM-like stationary solutions at each iteration,
it learns a quasi-continuous distribution.

RiteWeight's accuracy is equally high using 10 or 1,000 clusters, underscoring the algorithm's robustness.
However, RiteWeight's convergence speed \emph{does} depend on the number of clusters, with more clusters yielding faster convergence.
See the convergence analysis in Supplementary Fig.~\ref{fig:synmd-converge-kl}.
Following this analysis, we used $10^4$ iterations for RiteWeight with 1000 clusters and $10^6$ iterations for RiteWeight with 10 clusters.
In both cases, the trajectory weights reported here are averages over the final 1,000 iterations.

\begin{figure}[t]
\centering
\includegraphics[width=1.0\linewidth, trim=45 30 50 55,clip]{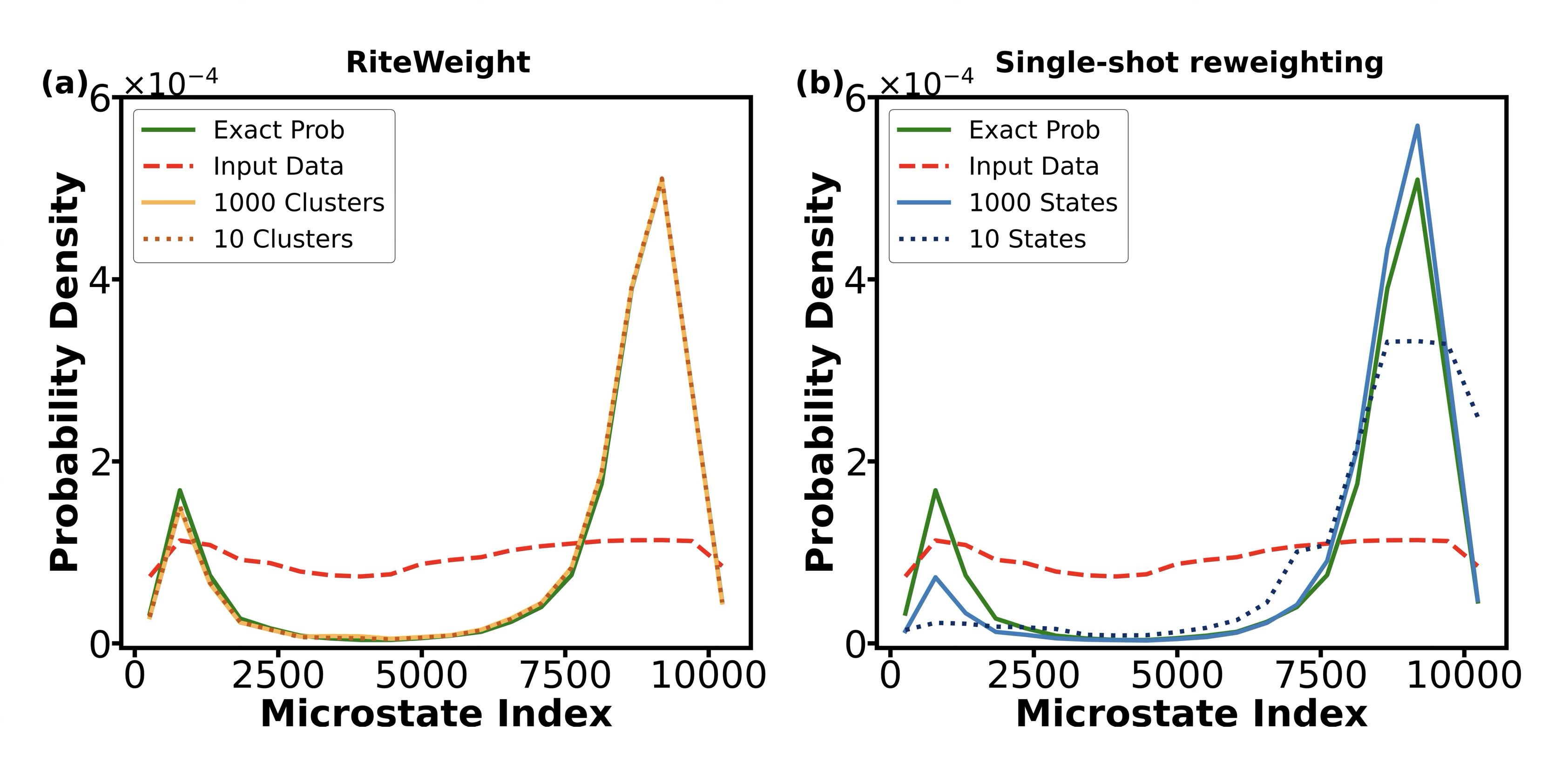}
\caption{Estimates of the equilibrium distribution for the SynMD Trp-cage system. (a) RiteWeight recovers the true equilibrium distribution (green solid line) starting from a \change{highly nonequilibrium initial distribution} (red dashed), using either 10 (dark orange dashed) or 1,000 clusters (orange solid).
Here, 
the lag time is 1 ns.
(b) Single-shot reweighting deviates from the true distribution using 10 (dark blue dashed line) or 1,000 (light blue solid) clusters. Note that the \change{micro}state index is ordered according to tIC$_1$.
}
\label{fig:synmd-equil-compare}
\end{figure}

In contrast to the converged RiteWeight estimates, the single-shot reweighting that occurs during the first RiteWeight iteration exhibits discrepancies due to discretization error. 
Even with 1,000 clusters, single-shot reweighting does not accurately predict the true stationary probabilities of the Trp-cage SynMD microstates.
The discrepancy here can be attributed to the lack of local equilibrium within the \change{discrete clusters}:
local equilibrium is necessary for the success of single-shot reweighting but not for RiteWeight. 

\FloatBarrier
\subsection{Atomistic MD for Trp-cage: Equilibrium and nonequilibrium}

RiteWeight also recovers the correct stationary distribution for atomistic Trp-Cage both in equilibrium and nonequilibrium settings.

\change{Fig.~\ref{fig:equil-MD} shows estimates of the equilibrium distribution along the slowest tIC (tIC$_1$), calculated using RiteWeight and an MSM. Supplementary Fig.~\ref{fig:2D_landscape} similarly shows a 2D projection of these estimates onto the two slowest tICs.}
The reference values come from a direct computation using the full 208 $\mu$s Trp-Cage trajectory.
RiteWeight matches closely with the reference despite starting from a distribution far from equilibrium.
Here, the RiteWeight distribution is an average over the final 10,000 iterations in a run of 100,000 iterations. See the convergence analysis in Supplementary Fig.~\ref{fig:RiteWeight_Eq_Conv}.

\begin{figure}[t]
\centering
\includegraphics[width=0.8\linewidth]{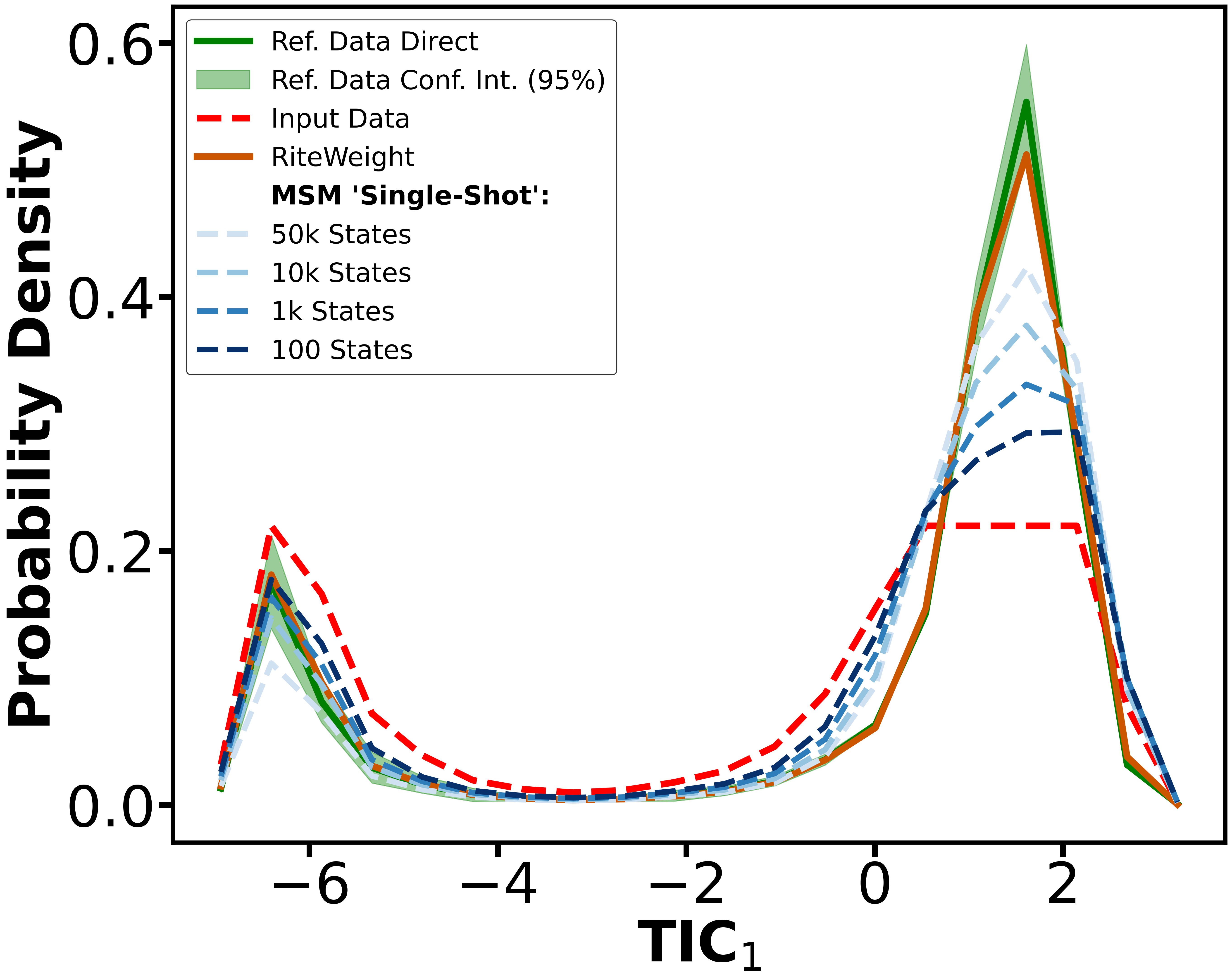}
\caption{Estimates of the equilibrium distribution for the atomistic Trp-cage system. 
Reference data is from a 208 $\mu$s MD trajectory (green with shaded error bars).
Starting from mis-distributed input data (red dashed),
RiteWeight recovers the true distribution (dark orange).
Also shown are MSM ``single shot'' estimates based on 100--50,000 states (blue dashed lines). 
Both RiteWeight and the MSM employ a lag time of 10 ns, and RiteWeight uses $n = 10$ clusters.
}
\label{fig:equil-MD}
\end{figure}

By contrast, MSM-based reweighting fails to recover the equilibrium distribution, even when using the same features as RiteWeight with as many as 50,000 clusters.
The inability to recover the correct stationarity distribution is due to an unsatisfied assumption that weights are locally in equilibrium within each \change{MSM state}.
The poor performance persists even when the MSM lag time is set to 100 ns; see Supplementary Fig.~\ref{fig:Eq_MSM_TR_100ns}. 

\change{The equilibrium stationary distribution is characterized by a detailed balance condition \cite{zuckerman2010book}.
However, RiteWeight does not enforce detailed balance directly, as do the MSM tools used for data analysis in this study \cite{noe2015pyemma}.
Nevertheless, Supplementary Fig.~\ref{fig:detail_balance} shows that RiteWeight recovers detailed balance to a similar or greater extent than a MSM.}

Fig.~\ref{fig:rw_ness} shows that RiteWeight is also successful in the nonequilibrium setting.
Here, the reference stationary measure comes from modifying the 208 $\mu$s Trp-cage trajectory by removing configurations that were more recently in the folded than the unfolded state.
Again RiteWeight corrects the imbalances in the starting distribution and delivers a faithful copy of the stationary distribution.
Here, the RiteWeight distribution is an average over the final 10,000 iterations in a run of 400,000 iterations, based on the mean first passage time convergence analysis in Supplementary Fig.~\ref{fig:MFPT_Conv}.

\begin{figure}[t]
\centering
\includegraphics[width=0.8\linewidth]{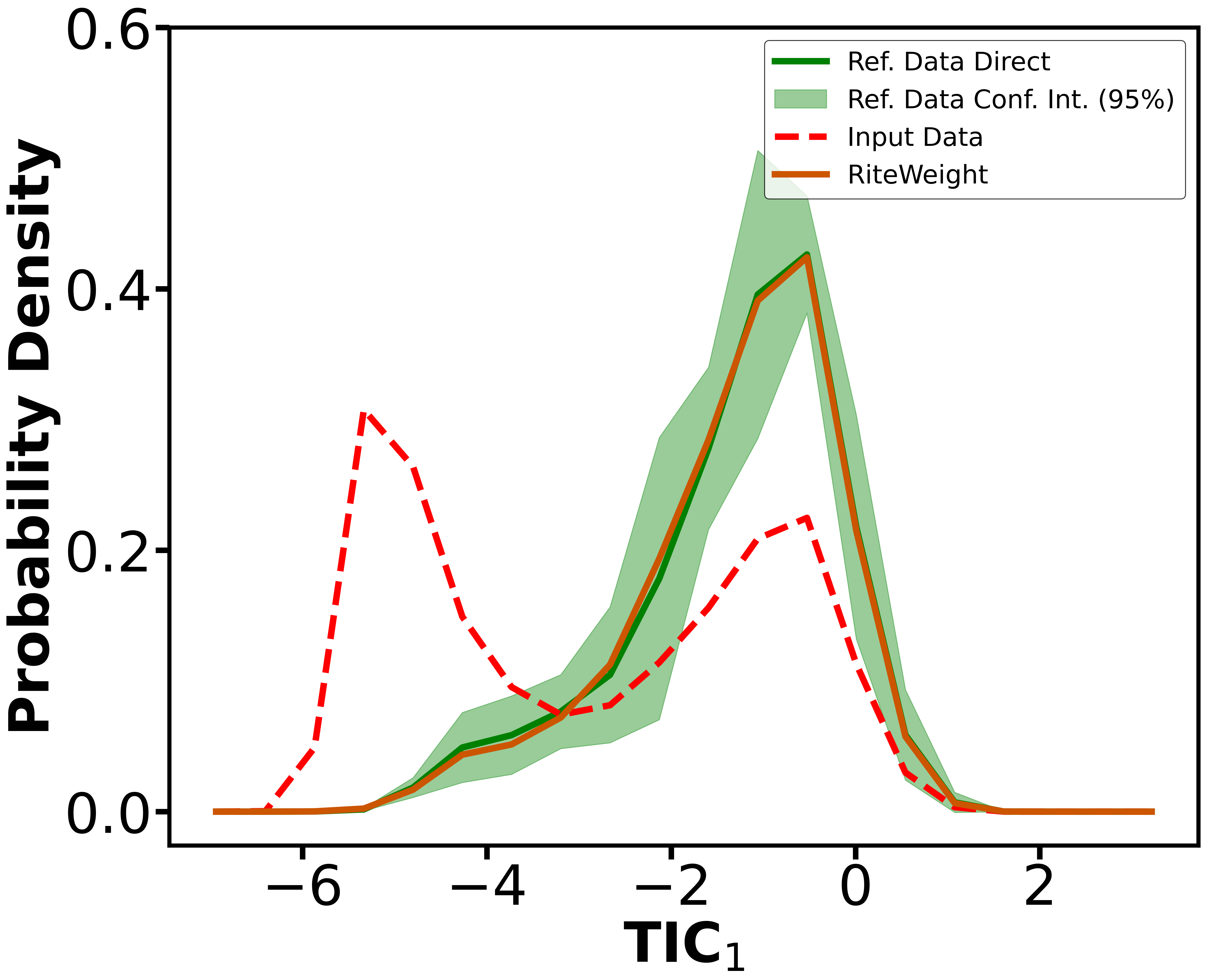}
\caption{Estimates of the nonequilibrium distribution for the atomistic Trp-cage system.
The plot shows the probability density for the intermediate region, after excluding the folded and unfolded states.
The initial distribution (red) represents all MD samples outside the folded and unfolded states, while the MD reference distribution (green) is derived from trajectory segments more recently in the unfolded than the folded state.
RiteWeight (dark orange) closely follows the reference distribution. 
}
\label{fig:rw_ness}
\end{figure}

\subsection{Atomistic MD for Trp-cage: Mean first-passage time}

Next we used RiteWeight to calculate the mean first-passage time (MFPT) for atomistic trp-Cage folding.
We ran RiteWeight using source-sink boundary conditions with the unfolded region as the source and the folded region as the sink.
We then evaluated the MFPT using the reciprocal flux relation \cite{hill2005free,bhatt2010steady}
\begin{equation}
\label{eq:mfpt}
\operatorname{MFPT}
= \Biggl(\frac{1}{\tau} \sum_{i_1 \notin \text{folded}, \, i_2 \in \text{folded}} w_i \Biggr)^{-1},
\end{equation}
where $\tau$ is the lag time and $w_i$ is the weight assigned to each trajectory segment $i$. For comparison, we also performed an MSM analysis using the first-step relation
\begin{equation}
\operatorname{MFPT}
= M_{\text{unfolded}},
\quad \text{where} \quad
M_I = \begin{cases}
    0, & I = \text{folded}, \\
    \tau + \sum_J T_{IJ} M_J, & I \neq \text{folded},
\end{cases}
\end{equation}
which is implemented in the PyEMMA software \cite{noe2015pyemma}.

Fig.~\ref{fig:mpft} shows that RiteWeight recovers the reference MFPT values, even with a short lag time of $\leq 1$ ns.
We note that the reference MFPT from the 208 $\mu$s trajectory slightly increases with lag time, because some of the first entries to the target folded state are missed at longer lag times.
Consistent with previous work \cite{suarez2021markov}, the MSM recovers the correct MFPT but only at sufficiently long lag times $\geq 100$ ns.
At short lag times $\leq 1$ ns, the MSM underestimates the MFPT by an order of magnitude or more.

\begin{figure}[t]
\centering
\includegraphics[width=0.8\linewidth]{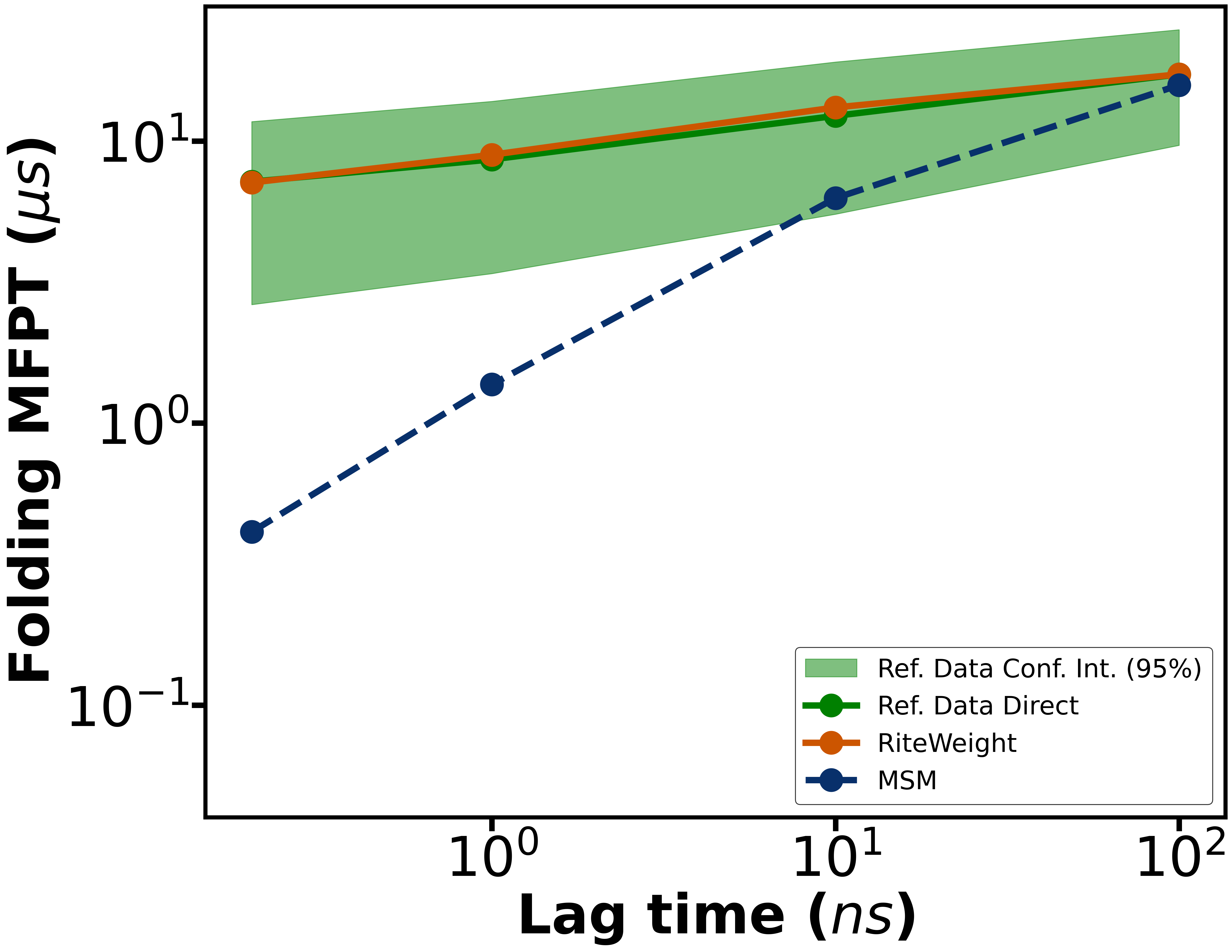}
\caption{Estimates of the mean first passage time for the atomistic Trp-cage system to fold.
Reference MD values (dark green) are based on 20 or fewer events with associated uncertainty (green shaded region). The reference values are compared to RiteWeight estimates based on $10$ clusters (dark orange) and MSM estimates based on $50$ \change{states} (blue).}
\label{fig:mpft}
\end{figure}

RiteWeight is able to reproduce the MFPT because it uses a self-consistent iteration that locally corrects the weights within each cluster.
The initial weights are close to equilibrium, but the nonequilibrium distribution from the source-sink boundary conditions is ``tilted'' with respect to equilibrium \cite{vandenEijnden2009tilting}.
We conjecture that updating the weights is especially important at short lag times because the trajectories determining the MFPT estimate are the peripheries of the clusters where the tilting is greatest.

\subsection{Atomistic MD for Trp-cage: Net fluxes}

A major goal of molecular simulation is understanding the ``mechanisms'' of a functional transition, here defined as the temporal sequences of the molecule.
We analyzed the mechanisms for atomistic Trp-cage folding using net fluxes,
\begin{equation}
\mbox{Net flux}(I,J) = \frac{1}{\tau} \Biggl(\sum_{i_1 \in I, \, i_2 \in J} w_i - \change{\sum_{i_1 \in J, \, i_2 \in I}} w_i \Biggr),
\label{netflux}
\end{equation}
where the macrostate indices $I$ and $J$ were ordered to produce a positive net flux in the reference MD data. \change{We analyzed fluxes involving a folded macrostate, an unfolded macrostate, and five intermediate macrostates, as described in Sec.~\ref{subsec:nonequilibrium_analysis} (\emph{Flux analysis}).
Supplementary Fig.~\ref{fig:flux_network} shows the intermediate macrostates projected onto the two slowest tICA components, and it illustrates the sequences of transitions from the unfolded to the folded macrostate.}

For comparison, we also performed an MSM analysis using the formula for the net fluxes involving the backward committor \cite{suarez2021markov}:
\begin{equation}
    \mbox{Net flux}(I,J) = \frac{1}{\tau} \Bigl(q_I^{(-)} \pi_I^{(\mathrm{eq})} T_{IJ} - q_J^{(-)} \pi_J^{(\mathrm{eq})} T_{JI} \Bigr).
\end{equation}
Here $\boldsymbol{\pi}^{(\mathrm{eq})}$ is the equilibrium steady state, which comes from solving $(\boldsymbol{\pi}^{(\mathrm{eq})})^\top \boldsymbol{T} = (\boldsymbol{\pi}^{(\mathrm{eq})})^\top$, and
$\boldsymbol{q}^{(-)}$ is the backward committor vector computed by pyEmma \cite{noe2015pyemma}, which gives the probability that a Markov chain with transition matrix $\boldsymbol{T}$ was more recently in the unfolded state than the folded state.

\begin{figure}[t]
\centering
\includegraphics[width=1.0\linewidth]{figures/NetFlux.png}
\caption{Estimates of the mechanistic fluxes for atomistic Trp-cage folding.
Estimates from RiteWeight (dark orange) and MSMs (blue)
are compared to reference values computed from the 208 $\mu$s MD trajectory.
The black dashed line represents equality of predicted and reference values.
The panels show lag times $\tau=$ 0.2, 1, 10, and 100 ns.}
\label{fig:netflux}
\end{figure}

Fig.~\ref{fig:netflux} shows that the net fluxes from RiteWeight closely match the MD reference for each lag time $\tau=$ 0.2, 1, 10, or 100 ns.
Thus, RiteWeight precisely describes the transition processes leading to folding, even processes with a lag time $\tau =0.2$ ns which is the shortest lag time available in the original MD data.
In contrast, the net fluxes from MSMs exhibit large discrepancies, echoing a previously identified deficiency of the MSM approach \cite{suarez2021markov}.  
At each lag time, the MSMs exhibit some net fluxes in the opposite direction of MD, implying a different temporal sequences of events.
The MSM fluxes are most quantitatively accurate at the 100 ns lag time, which is 1-2 orders of magnitude longer than the transition-path time over which folding events typically take place \cite{suarez2021markov}.
Supplementary Figs.~\ref{fig:net_flux_CI_short_lags} and \ref{fig:net_flux_CI_long_lags} report confidence intervals for the MD reference fluxes, revealing tht a range of MSM estimates fall outside the 95\% confidence intervals at each lag time $\tau \leq$ 10 ns.

\FloatBarrier

\section{Discussion and conclusions}

We have introduced RiteWeight, an algorithm for correcting distributions that deviate from equilibrium or nonequilibrium steady states. Although it is built on a framework of Markov state models (MSMs) that discretize configuration space, RiteWeight iteratively achieves self-consistency among all possible discretizations.
This new approach overcomes a key limitation of MSMs by correcting the distribution within each discrete region to conform to the steady state.
It accurately evaluates the steady state without requiring the Markov property at the cluster level.
When the internal distributions match the steady state, a wider array of observables --- such as mean first passage times and net fluxes --- can be calculated with high accuracy even at short lag times \cite{suarez2021markov}. The robustness of RiteWeight is illustrated by the insensitivity to the number of clusters used during the iteration process (Fig.\ \ref{fig:synmd-equil-compare}).

\change{Although RiteWeight employs coarse random clustering, over multiple iterations it adjusts the relative weights of trajectories at a fine resolution in feature space. Nevertheless, extremely close trajectories are unlikely to be separated into different clusters, so their relative weights are unlikely to change. To ensure the correction of weights, the feature space should be defined to differentiate configurations that are separated by a significant energy barrier. For example, in the Trp-Cage system, we constructed a suitable feature space using time-lagged independent components (tICs) derived from the dynamical behavior of the observed trajectories.}

RiteWeight produces models analogous to ``history augmented'' MSMs \cite{dinner2009separating,vandenEijnden2009tilting,suarez2014simultaneous,suarez2021markov} but with a key difference: no history information is used to analyze the data.  
For this reason, RiteWeight is capable of processing a wide variety of dynamics data, even data from adaptive sampling \cite{pande2010adaptive_msm,clementi2018adaptive}.
The raw data can be generated from multiple trajectories of arbitrary lengths initiated from arbitrary starting points.

The main requirement of the current implementation is that trajectory dynamics should be unbiased, i.e., outcomes from a given starting point should reflect the distribution that would occur in ``ordinary'' simulation without any unphysical forces or imposed stopping criteria.  Thus, for example, weighted ensemble (WE) data could be used but only for lag times up to the WE resampling time; for longer lag times, a statistical continuation analysis would be required \cite{dickson2025wemergebias}. However, it should also be possible to use Girsanov reweighting of biased trajectories, a strategy previously applied for MSM construction \cite{keller2017girsanov}.

Building on previous work \cite{suarez2021markov}, we have probed the shortcomings of MSMs for identifying mechanisms of folding in the atomistic Trp-cage system (Fig.~\ref{fig:netflux}).
Surprisingly, the MSMs built from MD reference data continue to show discrepancies from reference values even at lag times of $\tau \geq $ 100 ns.
At sufficiently long lag times, the system must become fully Markovian, but evidently at 100 ns, the net fluxes are still influenced by the initial distributions within the \change{MSM states}.
The sensitivity of the net flux estimates contrasts with the relative robustness of MSM mean first passage time estimates that match the reference value at a lag time of 100 ns (Fig.~\ref{fig:mpft}).

There remain important tasks ahead to realize the full potential of RiteWeight.  Most notably, the data analyzed here consists of large data sets with dense sampling in visited regions of configuration space.   In sparser data sets, the algorithm may need to rely on other regularization and configuration-space smoothing \cite{otten2026smooth-rw}.
In addition, we can modify the RiteWeight update formula~\eqref{reweight-eq}
to read
\begin{equation}
\label{e:alglr}
\wtnew_i = 
(1 - r) \, w_i + r \, 
\frac{\pi_I}{w_I} w_i.
\end{equation}
Here, \(r\) is a learning rate hyperparameter within the interval $(0,1]$. 
Choosing a smaller \(r\) value may mitigate the impact of noise without changing the RiteWeight fixed point (Theorem 1 in the appendix).
Another valuable strategy that may improve the estimation of observables in key configuration-space regions is adaptive sampling \cite{pande2010adaptive_msm,clementi2018adaptive}.
Finally, because RiteWeight can be accurate at any lag time, the range of lag times could be explored to optimize numerical efficiency in equilibrium calculations.  

We foresee diverse opportunities for applying RiteWeight in the future. 
To promote the longstanding goal of generating Boltzmann-weighted ensembles for proteins, 
RiteWeight could be applied to MD trajectories initiated with machine-learning approaches based on AlphaFold \cite{ai_feig2023ensemble,ai_cortes2025alphafold_ensembles,ai_jaakkola024alphafold_ensembles,ai_meiler2023protein_states_ensembles} or initiated with nuclear magnetic resonance data \cite{nmr_angyan2013ensemble,nmr_kannan2014conformational}.
Further, RiteWeight could be used to iteratively calculate local mean first passage times and generate transition paths in combination with the weighted ensemble method \cite{aristoff2023recent_math,ryu2025reducing}.

In conclusion, the new RiteWeight algorithm largely solves the problem of reweighting flawed initial data sets into desired steady state distributions.  The iterative use of random clusters leads to quasi-continuous and accurate final distributions.  Applications to equilibrium and nonequilibrium observables compare favorably to state-of-the-art methods, with a particular advantage in quantifying nonequilibrium net fluxes.

\section*{Data Availability Statement}
The python package for running the RiteWeight algorithm is available at \url{https://github.com/ZuckermanLab/rite_weight}. The data used in this study to evaluate RiteWeight’s performance is available at \change{\url{https://doi.org/10.5281/zenodo.18489245}}. The SynMD Trp-cage model used in this study is hosted at \url{https://github.com/jdrusso/SynD-Examples}

\section*{Acknowledgments}
We thank Jeremy Copperman and John Russo for valuable discussions on the subject of reweighting trajectories. We thank DE Shaw Research for the Trp-cage trajectory data. The authors are grateful for financial support from the NIH under Grant GM115805. \change{The research reported in this publication used computational infrastructure supported by the Office of Research Infrastructure Programs, Office of the Director, of the National Institutes of Health under Award Number S10OD034224.}

\bibliography{main}
\bibliographystyle{unsrt}


\section*{Appendix: Analysis of RiteWeight fixed point}

This section identifies the fixed point of the RiteWeight algorithm, which is defined as follows.
\begin{definition}[Fixed point]
A sequence of trajectory weights $(w_i)_{i=1}^N$ with $\sum_i w_i = 1$ is a fixed point of RiteWeight if the weights stay the same during each RiteWeight iteration for each positive probability choice of partition.
\end{definition}

In practice, we typically apply RiteWeight in a continuous state space with clusters defined by a random Voronoi tesselation.
Here for simplicity our analysis considers a finite state space with clusters defined by a random hyperplane tesselation.
See \cite{oreilly2022stochastic} for background on random hyperplane tesselations, which are called ``stable under iteration tesselations'' in stochastic geometry.

\begin{assumption}[Finite state space]
\label{assume:finite}
The state space consists of a finite number of distinct microstates $\alpha \in \mathbb{R}^d$.
\end{assumption}

\begin{assumption}[Random hyperplane tesselation]
\label{assume:random}
The RiteWeight clusters are generated through one or more iterations of the following procedure.
Initially, there is a single cluster containing all the microstates. 
At each iteration, any cluster $C$ that contains at least two microstates is randomly split by a hyperplane into two new clusters as follows. 
First, we choose a uniformly random direction
\begin{equation*}
    u \sim \operatorname{Unif}\{v \in \mathbb{R}^d \,:\,\lVert v \rVert = 1\}.
\end{equation*}
Then, conditional on $u$, we choose a uniformly random offset 
\begin{equation*}
    \gamma \sim \operatorname{Unif}\Bigl\{\eta \in \mathbb{R} \,:\, \min_{\alpha \in C} \alpha^\top u < \eta <  \max_{\alpha \in C} \alpha^\top u\biggr\}.
\end{equation*}
The normal direction and the offset define two new clusters of microstates, 
\begin{equation*}
    \{\alpha \in C:\alpha^\top u < \gamma\}
    \quad \text{and} \quad \{\alpha \in C:\alpha^\top u > \gamma\}
\end{equation*}
that are split by the hyperplane. 
\end{assumption}

Under Assumptions \ref{assume:finite} and \ref{assume:random},
the following main result shows that the RiteWeight fixed point is determined by the microstate transition matrix.
\change{The microstate transition matrix does not change with the RiteWeight iterations.
Therefore, assuming RiteWeight converges, it can only converge to a single fixed point.}   

\begin{theorem}[RiteWeight fixed point] \label{thm:riteweight}
    Assume a finite state space (Assumption~\ref{assume:finite}) and a random hyperplane model (Assumption \ref{assume:random}).
    Consider a sequence of trajectories with weights $(w_i)_{i=1}^N$
    satisfying $\sum_i w_i = 1$, and define the microstate transition matrix $\boldsymbol{P}$ with entries
    \begin{equation} \label{eq:defP}
        P_{\alpha \beta} = \frac{\sum_{i_1 \in \alpha, \, i_2 \in \beta} w_i}{\sum_{i_1 \in \alpha} w_i}.
    \end{equation}
    Assume $\boldsymbol{P}$ has a unique stationary measure.
    Then the sequence of weights $(w_i)_{i=1}^N$ is the fixed point of RiteWeight if and only if the vector $\boldsymbol{\mu}$ with entries
    \begin{equation}
       \mu_\alpha = \sum_{i_1 \in \alpha} w_i
    \end{equation}
    is the stationary measure of $\boldsymbol{P}$, that is,
    \begin{equation}
    \label{eq:stationary_weights}
        \boldsymbol{\mu}^\top \boldsymbol{P} = \boldsymbol{\mu}^\top.
    \end{equation}
\end{theorem}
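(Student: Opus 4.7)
The plan is to reduce both directions of the equivalence to a single algebraic identity relating the cluster transition matrix $\boldsymbol{T}$ for any partition to the microstate matrix $\boldsymbol{P}$. Writing $w_I = \mu_I = \sum_{\alpha \in I} \mu_\alpha$ and unwinding the definitions,
\begin{equation}
    T_{IJ} = \frac{1}{w_I}\sum_{\alpha \in I} \mu_\alpha \sum_{\beta \in J} P_{\alpha \beta}.
\end{equation}
Combined with the update rule $\wtnew_i = (\pi_I/w_I) w_i$, the fixed-point condition for a given partition reduces to the requirement $\pi_I = w_I$ for every cluster $I$, i.e., the cluster-aggregated vector $(w_I)$ must itself be the stationary measure that the algorithm returns for the associated $\boldsymbol{T}$.

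For the sufficiency direction, I would assume $\boldsymbol{\mu}^\top \boldsymbol{P} = \boldsymbol{\mu}^\top$ and verify that $(w_I)$ is stationary for $\boldsymbol{T}$ directly: substituting the identity above and swapping summation order gives
\begin{equation}
    \sum_I w_I T_{IJ} = \sum_{\beta \in J} \sum_\alpha \mu_\alpha P_{\alpha \beta} = \sum_{\beta \in J} \mu_\beta = w_J.
\end{equation}
A short auxiliary observation is that uniqueness of the stationary measure transfers from $\boldsymbol{P}$ to $\boldsymbol{T}$ on every partition, since any positive-weight microstate transition $\alpha \to \beta$ induces a positive-weight cluster transition from the cluster of $\alpha$ to the cluster of $\beta$, so communicating classes lift upward. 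With uniqueness in hand, $\pi_I = w_I$ and the update is a no-op.

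For the necessity direction, I would evaluate the fixed-point hypothesis on the fully refined partition in which every cluster is a singleton $\{\alpha\}$. On this partition $T_{\{\alpha\},\{\beta\}} = P_{\alpha \beta}$ and $w_{\{\alpha\}} = \mu_\alpha$, so the requirement $\pi_{\{\beta\}} = w_{\{\beta\}}$ collapses immediately to $\sum_\alpha \mu_\alpha P_{\alpha \beta} = \mu_\beta$, i.e., $\boldsymbol{\mu}^\top \boldsymbol{P} = \boldsymbol{\mu}^\top$. Microstates with $\mu_\alpha = 0$ host no segment starts and may be excised from the state space at the outset.

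The main obstacle is establishing that the singleton partition arises with strictly positive probability under the random hyperplane tesselation of Assumption~\ref{assume:random}. I would prove this by induction on the number of split rounds: since the microstates are finitely many distinct points of $\mathbb{R}^d$, for any cluster $C$ with $\lvert C \rvert \geq 2$ the set of unit directions $u$ on which the projections $\{\alpha^\top u : \alpha \in C\}$ are not all equal is a full-measure open subset of the unit sphere, and for each such $u$ the admissible offsets $\gamma$ form a nonempty interval, so a nontrivial split of $C$ occurs with full probability at every round. Iterating strictly decreases the size of the largest cluster, so after a bounded number of rounds every cluster is a singleton with probability one. A subtlety worth flagging is that a microstate interior to the convex hull cannot be isolated by a single cut, but the cumulative effect of several cuts achieves the singleton partition, so the induction carries through.
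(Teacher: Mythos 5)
Your sufficiency direction is essentially the paper's own computation and is fine, including the observation that uniqueness of the stationary measure lifts from $\boldsymbol{P}$ to $\boldsymbol{T}$. The necessity direction, however, has a genuine gap: it rests entirely on the claim that the all-singletons partition is a ``positive probability choice of partition'' under Assumption~\ref{assume:random}. That assumption generates clusters by ``one or more'' rounds of hyperplane splitting, and in the algorithm the number of clusters is a small, user-fixed $n \ll N$; the main text stresses that the theorem concerns ``coarse random clusters containing multiple microstates'' and holds ``independent of the number of clusters.'' If only a few splitting rounds are performed --- in particular whenever the number of clusters stays below the number of microstates --- the singleton partition occurs with probability zero and your argument yields nothing. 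Your induction showing that sufficiently many rounds almost surely isolate every microstate is correct as geometry, but it applies only if the tesselation is actually run that many rounds, which is precisely what you may not assume. In effect you have proved the theorem only in the regime where the discretization is already at microstate resolution, where $\boldsymbol{T}=\boldsymbol{P}$ and the statement is nearly tautological; the content of the theorem is that coarse clusters already force $\boldsymbol{\mu}^\top\boldsymbol{P}=\boldsymbol{\mu}^\top$.

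The paper closes this gap using only the two-cluster partitions produced by a single hyperplane split, which do have positive probability. A generic direction $u$ induces an ordering $\alpha_1^\top u < \cdots < \alpha_n^\top u$ of the microstates, so each prefix set $\{\alpha_1,\dots,\alpha_i\}$ arises as a cluster with positive probability. Imposing the fixed-point condition $\pi_J = w_J$ on every such cluster gives $\boldsymbol{\mu}^\top\boldsymbol{A} = \boldsymbol{\mu}^\top\boldsymbol{P}\boldsymbol{A}$, where the columns of $\boldsymbol{A}$ are the indicator vectors of these clusters; since the prefix indicators span $\mathbb{R}^n$, this forces $\boldsymbol{\mu}^\top = \boldsymbol{\mu}^\top\boldsymbol{P}$. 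To repair your proof you would need either to add the hypothesis that the tesselation is refined to singletons with positive probability (which changes, and weakens, the theorem), or to replace the singleton step with a spanning argument of this kind that extracts enough linearly independent constraints from coarse partitions alone.
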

\begin{proof}
First we check that equation~\eqref{eq:stationary_weights} implies $(w_i)_{i=1}^n$ is the fixed point of RiteWeight.
This happens if the weights $w_I = \sum_{i_1 \in I} w_i$ appear as entries of the left leading eigenvector of the cluster transition matrix $\boldsymbol{T}$.
By assumption, the microstate transition matrix $\boldsymbol{P}$ has a unique stationary measure, so it \change{has one closed communicating class.}
It follows that the cluster transition matrix $\boldsymbol{T}$ \change{has one closed communicating class,} so $\boldsymbol{T}$ has a unique stationary measure also.
Next, equation~\eqref{eq:stationary_weights} implies
\begin{equation*}
\label{eq:stationarity}
    \sum_I w_I T_{IJ}
    = \sum_{\alpha} \sum_{\beta \in J}\mu_\alpha P_{\alpha \beta}
    = \sum_{\beta \in J} \mu_\beta
    = w_J,
    \quad \text{for each cluster } J.
\end{equation*}
This confirms that the left leading eigenvector has the correct entries, so $(w_i)_{i=1}^n$ is the fixed point of RiteWeight.

Next, we assume that $(w_i)_{i=1}^n$ is the fixed point of RiteWeight and check that equation~\eqref{eq:stationary_weights} holds.
Since $(w_i)_{i=1}^n$ is the fixed point, each positive-probability cluster $J$ must satisfy
\begin{equation*}
    \sum_{\alpha \in J} \mu_\alpha = w_J = \sum_I w_I T_{IJ} 
    = \sum_\alpha \sum_{\beta \in J} \mu_\alpha P_{\alpha\beta}.
\end{equation*}
By considering all positive probability clusters, we arrive at the identity
\begin{equation*}
    \boldsymbol{\mu}^\top \boldsymbol{A} = \boldsymbol{\mu}^\top \boldsymbol{P} \boldsymbol{A}
\end{equation*}
where $\boldsymbol{A}$ is the matrix whose columns are characteristic functions for each positive probability random cluster:
$1$ values indicate membership in the cluster and $0$ values indicate non-membership.
To complete the proof, we will show that $\boldsymbol{A}$ has full column rank and therefore \eqref{eq:stationary_weights} holds. 

We observe that the first iteration of the random hyperplane tesselation generates a uniformly random direction $u \in \mathbb{R}^d$ that leads to distinct values $\alpha^\top u$ for every distinct microstate $\alpha$ with probability one.
Hence, there is an ordering of microstates, say $\alpha_1, \ldots, \alpha_n$, so that the event
\begin{equation*}
    \alpha_1^\top u < \cdots < \alpha_n^\top u,
\end{equation*}
occurs with positive probability.
It follows that a division into clusters
\begin{equation*}
    \{\alpha_1, \ldots, \alpha_i\}
    \quad \text{and} \quad
    \{\alpha_{i+1}, \ldots, \alpha_n\}
\end{equation*}
occurs with positive probability for each $i=1,\ldots,n-1$. 
With the microstates ordered in this way, the linear span of the columns of $\boldsymbol{A}$ includes all the vectors $\sum_{j=1}^i \boldsymbol{e}_j$, where $\boldsymbol{e}_i$ is a standard basis vector.
This shows
$\boldsymbol{A}$ has full column rank and equation~\eqref{eq:stationary_weights} holds, completing the proof.
\end{proof}

\section*{Supplementary figures}

\setcounter{figure}{0}
\renewcommand{\thefigure}{S\arabic{figure}} 
\renewcommand{\figurename}{Figure}

\begin{figure}[h]
\centering
\includegraphics[width=\linewidth]{figures/KL_Divergence_Cluster1000_C10.png}
\caption{Convergence of the SynMD Trp-cage equilibrium distribution estimated using RiteWeight.
The vertical axes show the symmetric Kullback-Leibler (KL) divergence between the estimated stationary distribution in the current iteration vs. the estimated distribution 100 iterations prior.
The horizontal axes show the number of iterations. (a) The KL divergence using 1000 clusters. (b) The KL divergence using 10 clusters.}
\label{fig:synmd-converge-kl}
\end{figure}
 
\begin{figure}[h]
\centering
\includegraphics[width=\linewidth]{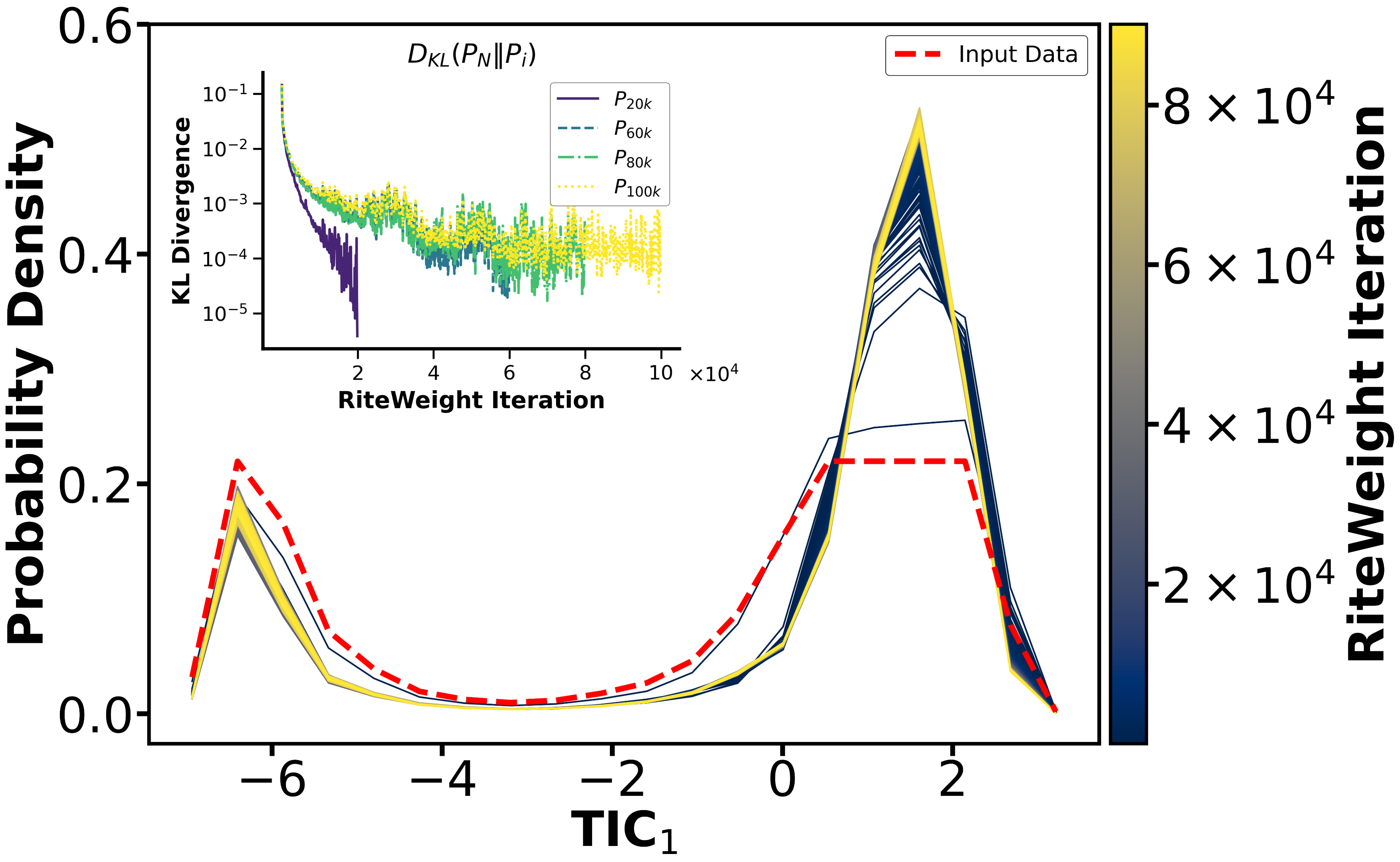}
\caption{Convergence of the atomistic Trp-cage equilibrium distribution using RiteWeight. The main plot shows the convergence of the initial (dashed red) distribution to the final (yellow) distribution using increments of 100 iterations.  The inset shows the KL divergence of the RiteWeight distribution with reference to iteration 20,000, 60,000, 80,000 or 100,000, as shown in the legend.  The similarity of the curves for 60,000 iterations and beyond suggests RiteWeight is converged.
}
\label{fig:RiteWeight_Eq_Conv}
\end{figure}

\begin{figure}[h]
\centering
\includegraphics[width=1.0\linewidth, trim=50 30 50 40,clip]{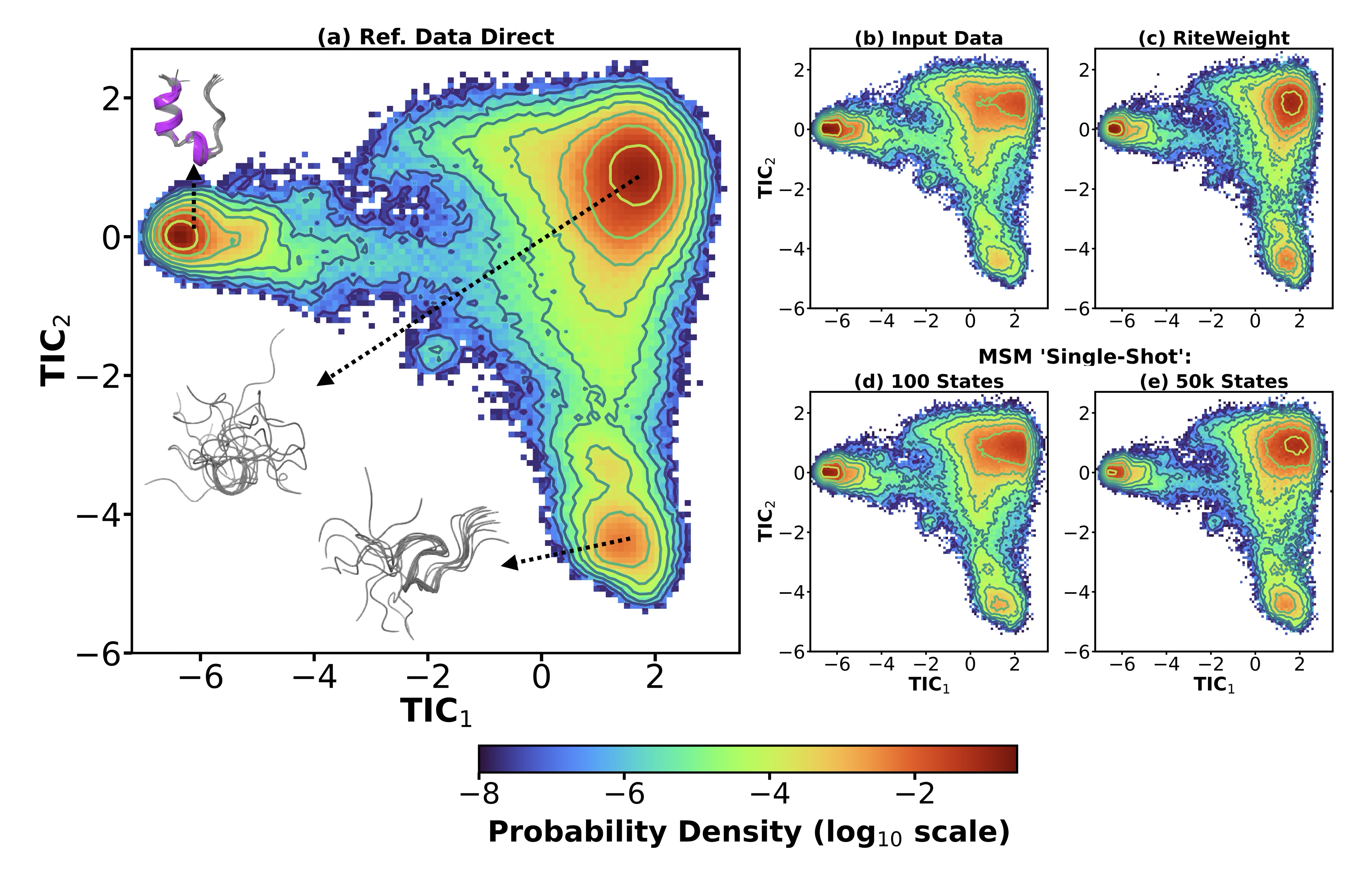}
\caption{\change{Equilibrium probability densities projected onto the two slowest tICA (TIC) components. Contour lines indicate $0.82$-unit increments in the base-10 logarithm of the probability density. (a) The reference density is calculated from a 208 $\mu$s MD trajectory. Ten randomly selected configurations are overlaid for each of the most probable states.
(b) The input data is mis-distributed relative to the reference. (c) RiteWeight successfully recovers the true equilibrium distribution. (d–e) MSM `single-shot' estimates using coarse (100 states) and fine (50,000 states) resolutions fail to recover features of the reference  distribution near the energy basins. The same data employed for panels (b–e) is shown projected onto TIC$_1$ in Fig.~\ref{fig:equil-MD} of the main text.}}
\label{fig:2D_landscape}
\end{figure}

\begin{figure}[t]
\centering
\includegraphics[width=0.7\linewidth]{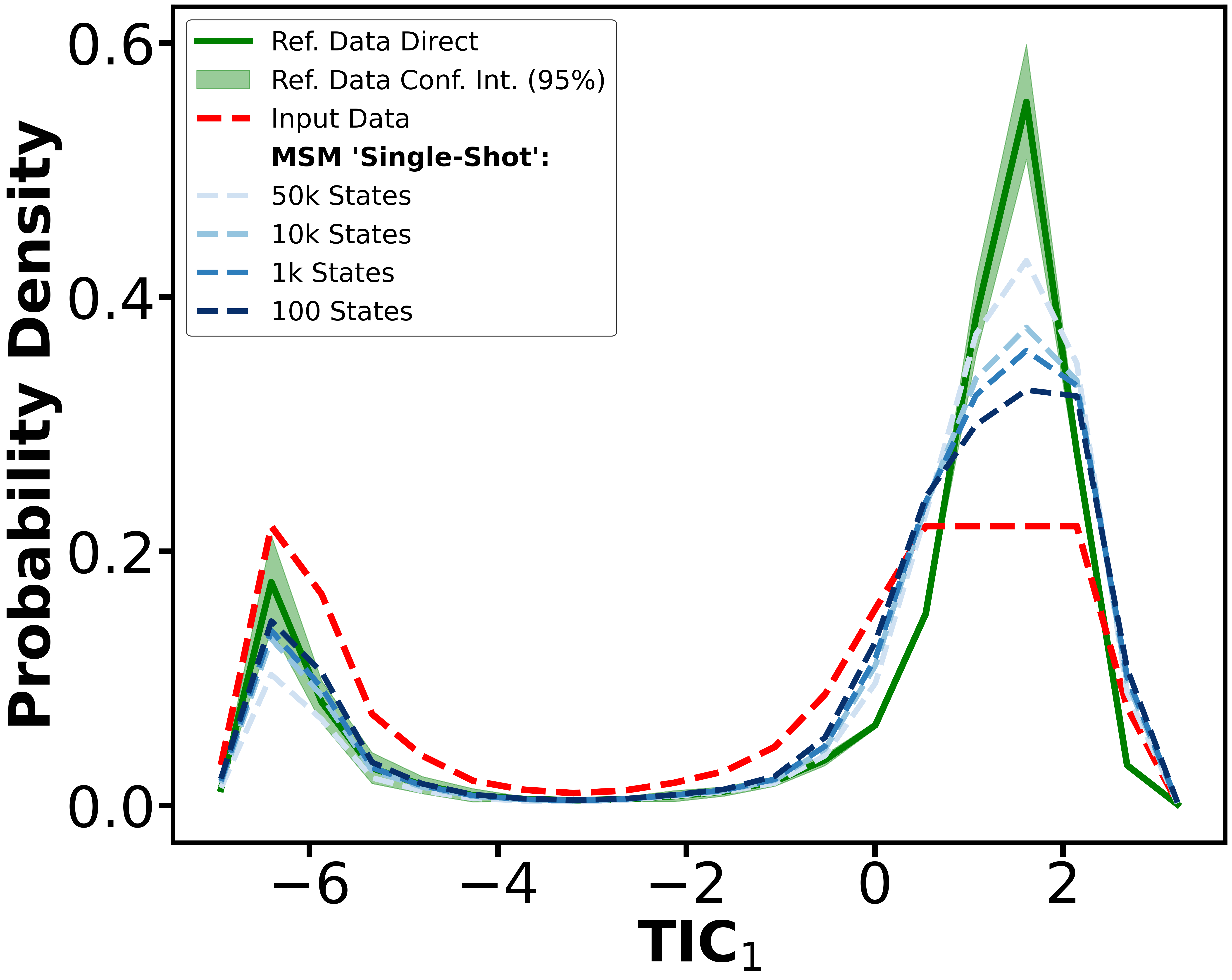}
\caption{Markov state model ``single shot'' estimates for the equilibrium stationary distribution based on different numbers of \change{MSM states} (blue dashed lines) using the lag time $\tau$ = 100 ns. 
}
\label{fig:Eq_MSM_TR_100ns}
\end{figure}

\begin{figure}[h]
\centering
\includegraphics[width=0.8\linewidth,trim=50 30 50 52,clip]{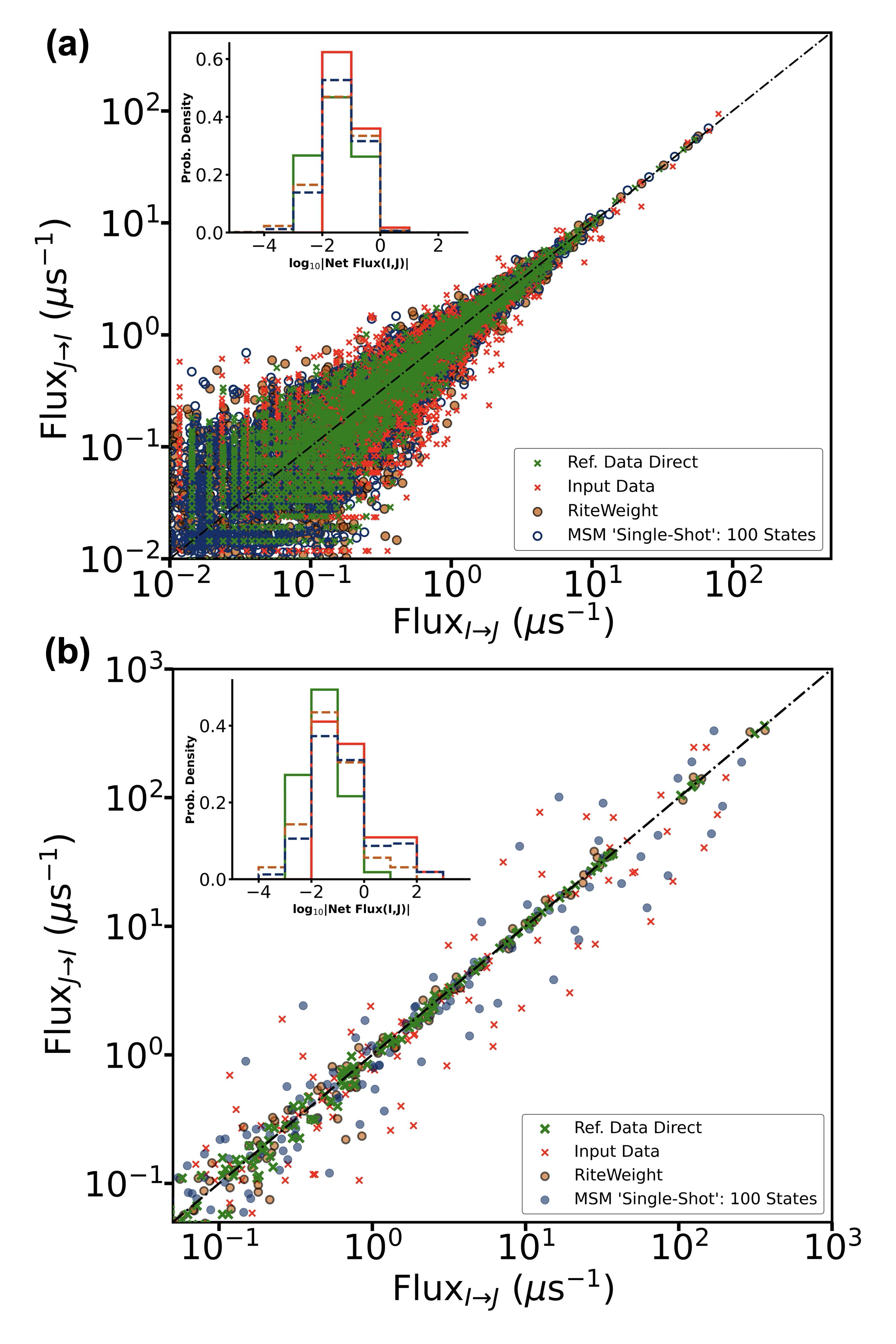}
\end{figure}
\FloatBarrier
\captionsetup{type=figure}
\captionof{figure}{\change{Detailed-balance assessment at fine and coarse scales. 
Here, $\mathrm{Flux}_{I\rightarrow J}$ is the total weight of trajectory segments transitioning from state $I$ to $J$ over a lag time of $\tau$ = 10 ns.
The violation of detailed balance is measured by the distance of data points from the diagonal.
The insets show histograms of the net flux magnitudes $|\text{Net Flux}(I,J)| = |\text{Flux}_{I\rightarrow J} - \text{Flux}_{J \rightarrow I}|$
for the reference data (green solid line), mis-distributed input data (red solid line), RiteWeight estimates (dark orange dashed line), and MSM ``single-shot'' estimates based on 100 states (blue dashed line).
(a) Fine-scale regions of configuration space are defined by the 100 MSM states used to discretize the equilibrium distribution in Fig.~\ref{fig:equil-MD} of the main text.
At the fine scale, MSMs and RiteWeight both shift to smaller violations compared to the input data, even though MSMs explicitly enforce detailed balance during stationary-distribution estimation \cite{noe2015pyemma} whereas RiteWeight does not.
(b)
Coarse-scale regions are defined by 20 uniform bins along TIC$_1$. 
The fluxes estimated by RiteWeight are similar to those from the reference data and lie close to the diagonal; hence RiteWeight largely restores detailed balance in this coarser analysis, whereas MSM estimates fail to recover detailed balance.
}}
\label{fig:detail_balance}

\begin{figure}[t]
\centering
\includegraphics[width=0.7\linewidth]{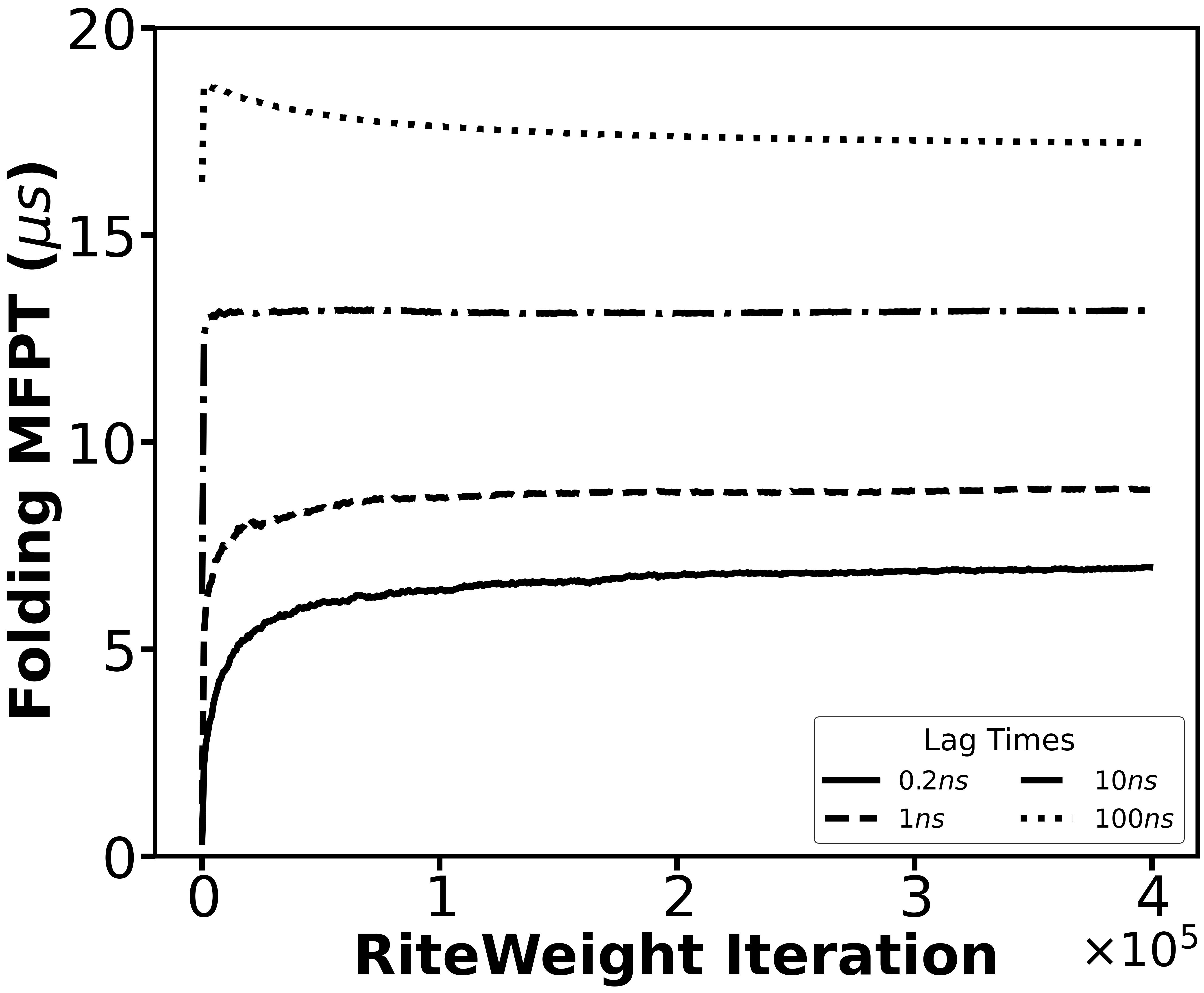}
\caption{Convergence of the atomistic Trp-cage folding MFPT using RiteWeight.
Four different lag times are examined, and the MFPT is steady after 200,000 iterations for any lag time. 
The variation of MFPT with lag time is not an artifact but a physical consequence of the fact that first folding events sometimes are missed with longer lag times.
}
\label{fig:MFPT_Conv}
\end{figure}

\begin{figure}[h]
\centering
\includegraphics[width=1.0\linewidth,trim=28 30 50 40,clip]{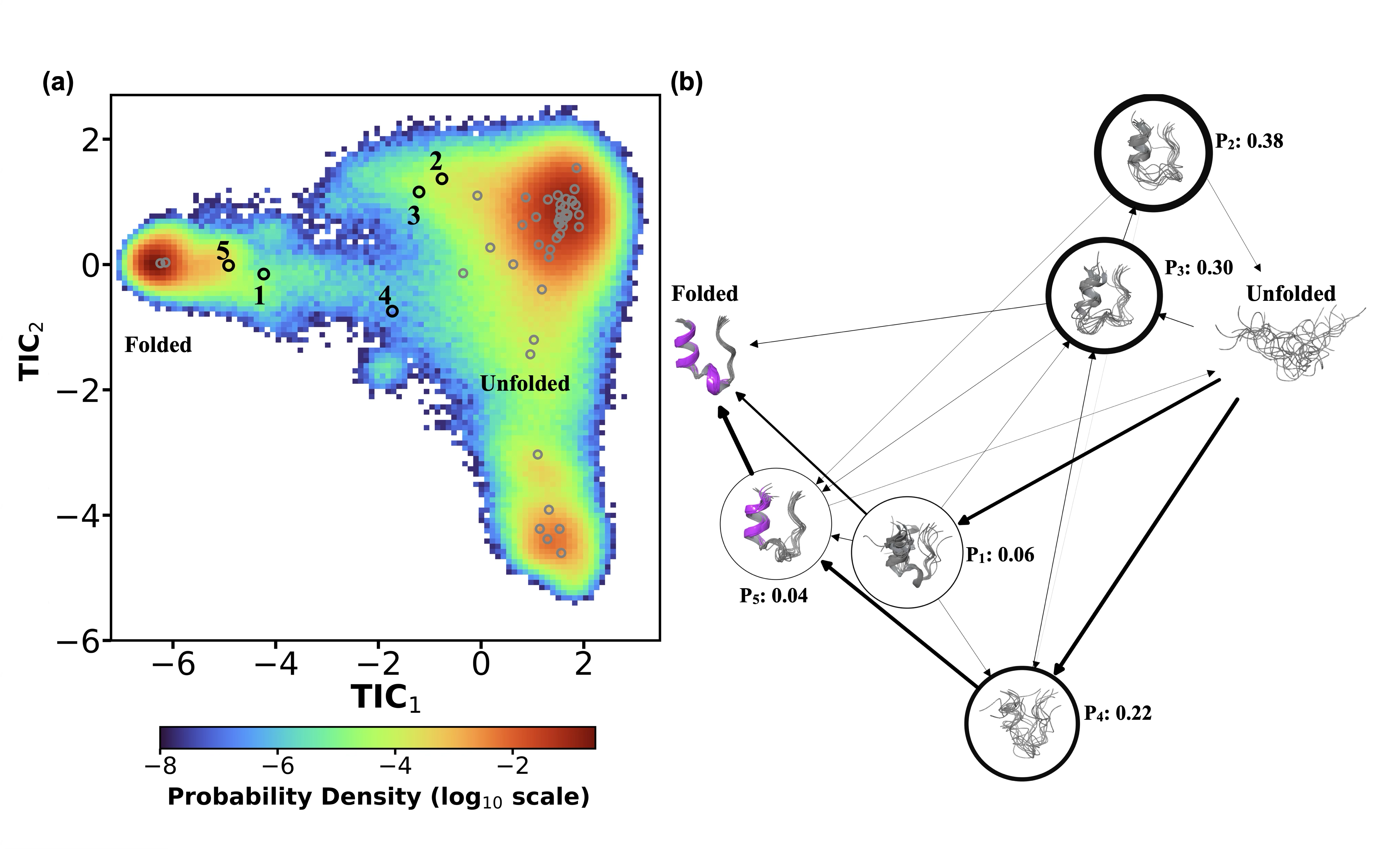}
\caption{\change{Folding intermediates and sequences of transitions constructed from reference data at a $0.2$ ns lag time.
(a) The equilibrium probability density is projected onto the two slowest tICA components. 
Black circles indicate MSM states selected as intermediate macrostates, while gray circles denote MSM states assigned to the unfolded and folded macrostates.
(b) In the nonequilibrium flux network, arrow thicknesses indicate net fluxes.
Node thicknesses indicate non-equilibrium steady state probabilities for the intermediate macrostates, which are normalized to sum to one for clarity.
These non-equilibrium probabilities are distinct from the equilibrium probability density in panel (a).
Ten randomly selected configurations are overlaid for each macrostate, with purple regions highlighting folded helical segments.
Cluster $2$ is partially unfolded;
clusters $1$, $3$, and $5$ feature a partially folded N-terminus; and cluster $4$ identifies the formation of the hydrophobic core.}}
\label{fig:flux_network}
\end{figure}

\begin{figure}[h]
\centering
\includegraphics[width=0.8\linewidth]{figures/net_flux_with_CI_lines_TR0_2ns.png}
\includegraphics[width=0.8\linewidth]{figures/net_flux_with_CI_lines_TR1ns.png}
\caption{Net flux analysis showing confidence intervals for MD-based estimates.
For each pair of \change{macro}states, the net flux is shown for RiteWeight (dark orange), Markov state models (MSM -- blue), and molecular dynamics (MD -- X and confidence interval).
The MD estimates are calculated from the ensemble of reactive trajectories that proceed from the unfolded to the folded macrostate.
The MD 95\% confidence interval is derived by bootstrapping the 24 round-trip paths starting from the folded state, entering the unfolded state, and then ending back at the folded state.
Two lag times are used for analysis: 0.2 ns (top) and 1.0 ns (bottom).}
\label{fig:net_flux_CI_short_lags}
\end{figure}

\begin{figure}[h]
\centering
\includegraphics[width=0.8\linewidth]{figures/net_flux_with_CI_lines_TR10ns.png}
\includegraphics[width=0.8\linewidth]{figures/net_flux_with_CI_lines_TR100ns.png}
\caption{Net flux analysis showing confidence intervals for MD-based estimates.
Two lag times are used for analysis: 10 ns (top) and 100 ns (bottom).
}
\label{fig:net_flux_CI_long_lags}
\end{figure}

\end{document}